\DeclareMathOperator{\ob}{ob}
\newcommand{\bigcat}[1]{\mathsf{#1}}
\newcommand{\cd}[2][]{\vcenter{\hbox{\xymatrix#1{#2}}}}
\newcommand{\citegenitif}[1]{\citeauthor{#1}'s \citeyearpar{#1}}
\newcommand{\citegenitifs}[1]{\citeauthor{#1}' \citeyearpar{#1}}
\newcommand{\theory}{\mathbb{T}}
\newcommand{\C}{\mathcal{C}}
\newcommand{\D}{\mathcal{D}}
\newcommand{\SMCCat}{\bigcat{SMCCat}}
\newcommand{\ty}{\textsf{ty}}
\newcommand{\smc}{\textsc{smc}}
\newcommand{\imll}{\textsc{imll}}
\newcommand{\thg}{{\mathord{\text{--}}}}
\newcommand{\id}{\mathrm{id}}
\newcommand{\iso}{\cong}
\newcommand{\impll}{\multimap}
\newcommand{\tens}{\otimes}
\newcommand{\bigtens}{\bigotimes}
\newcommand{\Bigtens}[1]{\displaystyle{\bigtens_{#1}}}
\newcommand{\Coprod}[1]{\displaystyle{\coprod_{#1}}}
\newcommand{\nop}[1]{#1^{\bot}}
\newcommand{\name}[1]{\ulcorner #1 \urcorner}
\title{Graphical presentations of symmetric monoidal closed theories}
\author{Richard Garner\inst{1} \and Tom Hirschowitz\inst{2} \and Aur\'elien Pardon\inst{3}}
\institute{Uppsala University \and CNRS, Universit\'e de Savoie \and ENS Lyon}
\begin{document}

\maketitle
\begin{abstract}
  We define a notion of symmetric monoidal closed (\smc) theory, consisting of a \smc\
  signature augmented with equations, and describe the classifying categories of such theories in terms
  of proof nets.
  %We sketch as an application how \smc\ signatures provide a
%  representation for syntax with variable binding.
\end{abstract}

\section{Introduction}

In this note, in preparation for a sequel using symmetric monoidal closed
(\smc) categories to reconstruct \citegenitif{Milner:bigraphs} bigraphs, we
define a notion of \smc\ theory, and give a graphical presentation of the free
\smc\ category generated by such a theory.

\subsection{Symmetric monoidal closed theories}

Recall that a \emph{many-sorted algebraic theory} is specified by first giving
a \emph{signature}---a set of sorts $X$ and a set $\Sigma$ of operations with
arities---together with a set of \emph{equations} over that signature. For
example, the theory for monoids is specified by taking only one sort $x$, and
operations
\begin{center}
    \hfil $m \colon x \times x \to x$ \hfil \mbox{ and } \hfil $e \colon 1 \to x$, \hfil
\end{center}
together with the usual associativity and unitality equations. We may equally
well view this signature as given by a graph
$$ \begin{diagram}
    x \times x & \rTo^{m} & x & \lTo^{e} & 1
\end{diagram} $$
whose vertices are labelled by objects of the free category with finite
products generated by $X$. In this paper, we follow the same route, but
replacing from the start finite products with \emph{symmetric monoidal closed}
structure. Thus, an \smc\ signature is given by a set of sorts $X$, together
with a graph with vertices in the free \smc\ category generated by $X$, so that
instead of cartesian product, we have available the logical connectives of
\citet{ll,Girard:survey}'s Intuitionistic Multiplicative Linear Logic
(henceforth \imll): a tensor product $\tens$, its right adjoint $\impll$, and
its unit $I$.  This permits idioms from higher-order abstract syntax
\citep{PfenningElliott:hoas}, e.g., taking the graph
\begin{equation}
    \label{eq:linsig}
    \begin{diagram}
	(x \impll x) & \rTo^{\lambda} & x & \lTo^{@} & (x \tens x)
    \end{diagram}
\end{equation}
as a signature. An \smc\ theory is now given by a \smc\ signature, together
with a set of equations over that signature. This notion of theory gives rise
to a \emph{functorial semantics} in the sense of \citet{Lawvere:thesis}, the
crux of which is the following. We may define a notion of \emph{model} for an
\smc\ theory in an arbitrary \smc\ category, and may associate to each \smc\
theory $\mathbb T$ a \emph{classifying category} $\C_{\mathbb T}$: this being a
small \smc\ category for which strict \smc\ functors $\C_{\mathbb T} \to \D$
are in bijection with models of $\mathbb T$ in $\D$. The existence of
$\C_{\mathbb T}$ follows from general considerations of categorical universal
algebra; but the description this gives of $\C_\mathbb T$ is syntactic. The
main purpose of this paper is to give a graphical presentation of $\C_\mathbb
T$. Its objects will be \imll\ formulae, while its morphisms are variants of
\citeauthor{Hughes:freestar}' \citeyearpar{Hughes:freestar} proof nets,
satisfying a correctness criterion familiar from \citegenitif{Danos:mll}.

%\subsection{Applications}
%Having set up our construction of generic models, we sketch how to
%represent programming language syntax using \smc\ theories.
%
%For example, to recover the untyped $\lambda$-calculus, a first idea
%is to use the signature~(\ref{eq:linsig}), but the generic model over
%it only corresponds to a \emph{linear} $\lambda$-calculus, where each
%variable has exactly one occurrence.  We thus need to enrich the
%signature with two new operations allowing to share variables
%$$ \begin{diagram}
%    I & \lTo^{w} & x & \rTo^{c} & x \tens x
%\end{diagram} $$
%together with the equations making $x$ a commutative comonoid object.
%However, as there is no disctinction between terms and variables, such an
%addition also provides sharing of terms.
%If we want to recover the $\lambda$-calculus exactly, we use in fact the 2-sorted signature
%\begin{equation}
%    \begin{diagram}[height=15pt,width=30pt]
%	v \impll t & \rTo^{\lambda} & t & \lTo^{@} & t \tens t \\
%	& & \uTo^{?} & & \\
%	I & \lTo^{w} & v & \rTo^{c} & v \tens v,
%    \end{diagram}
%    \label{eq:siglam}
%\end{equation}
%with one sort $t$ for terms and one sort $v$ for variables, plus an
%\emph{instantiation} operation ${?}$ for using a variable as a
%term.  Simple operators may be added to the signature,
%\eg\ \begin{diagram} t \tens (v \impll t) & \rTo^{\mathit{let}} & t,
%\end{diagram}
%to enrich the language.

\subsection{Related work}
There is an extensive literature devoted to describing free \smc\ categories of
the kind we consider here. In their seminal work on coherence for closed
categories, \citet{Kelly:coherence} introduced what are now known as
Kelly-MacLane graphs, but did not go so far as to obtain a characterisation of
free \smc\ categories. Such a construction was first carried out by
\citet{Trimble:phd}, and subsequently \citet{Cockett:weakly} and
\citet{Tan:phd}, using ideas taken from \citeauthor{ll}'s
\citeyearpar{ll,Girard:survey} proof nets (actually, \citet{Cockett:weakly}
construct the free star-autonomous category, but the free \smc\ category is
obtained as its full subcategory of \imll\ formulae). Variations on this theme
are presented by \citet{Lamarche:nets} and \citet{Hughes:freestar}. In all
cases, morphisms are roughly equivalence classes of proof nets, with variations
in the presentation. In our sequel to this paper, we wish to make use of
Hughes' presentation, mainly because:
\begin{itemize}
\item it reduces the graphical burden to the minimum: where others
    introduce nodes corresponding to linear logical connectives, Hughes
    does not;
\item its composition behaves nicely: it is defined on representatives and
    given by a straightforward gluing of graphs, where others rely on
    tricky mechanisms, e.g., \citegenitif{Trimble:phd} \emph{rewiring}.
\end{itemize}
On the other hand, Hughes' equivalence classes of proof nets have the
inconvenience of lacking normal forms, which, e.g., Trimble's enjoy.

However, Hughes only construct the free \smc\ over a \emph{set}, which
merely accounts for the sorts of a signature.  Thus we must extend his
construction to deal with an arbitrary \smc\ theory, which we do by
reducing from the general case to that of a free \smc\ on a set.
\citet{Cheng:trees} observed a relationship between trees and
Kelly-MacLane graphs, of which our result is essentially a
generalisation.

\section{Symmetric monoidal closed theories}\label{sec:theories}
Given a set $X$, we write $\overline X$ for the set of symmetric monoidal
closed (henceforth \smc) types over $X$; it is inductively generated by the
following grammar:
\begin{equation*}
    e ::= x \mathrel{|} I \mathrel{|} e \tens e \mathrel{|} e \multimap e \qquad \text{(where $x \in X$)\text.}
\end{equation*}
By a \emph{\smc\ signature}, we mean a quadruple $(X, \Sigma, s, t)$ where $X$
is a set of ground types, $\Sigma$ a set of ground terms, and $s, t \colon
\Sigma \to \overline X$ are source and target arity functions. We may also
write $\Sigma(a,b)$ for the set of $f \in \Sigma$ for which $s(f) = a$ and
$t(f) = b$. For each \smc\ signature, we inductively generate the set
$\overline \Sigma$ of \emph{derived terms}, together with source and target
functions $\overline s, \overline t \colon \overline \Sigma \to \overline X$,
as follows. We require that for each $f \in \Sigma(a,b)$, we have $f \in
\overline \Sigma(a, b)$; for each $a,b,c \in \overline X$, we have
\begin{align*}
  \alpha_{abc} & \in \overline \Sigma\big(a \otimes (b \otimes c),\, (a \otimes b) \otimes c\big)\text; &
  \alpha_{abc}^{-1} & \in \overline \Sigma\big((a \otimes b) \otimes c,\, a \otimes (b \otimes c)\big)\text; \\
  \lambda_a & \in \overline \Sigma(I \otimes a,\, a)\text; &
  \lambda_a^{-1} & \in \overline \Sigma(a,\, I \otimes a)\text; \\
  \rho_a & \in \overline \Sigma(a \otimes I,\, a)\text; &
  \rho_a^{-1} & \in \overline \Sigma(a,\, a \otimes I)\text; \\
  \sigma_{ab} & \in \overline \Sigma\big(a \otimes b,\, b \otimes a)\text; &
  \epsilon_{ab} & \in \overline \Sigma\big((a \multimap b) \otimes a,\, b\big)\\
  \text{and }\ \  \eta_{ab} & \in \overline \Sigma\big(a,\, b \multimap (a \otimes
  b)\big)\text;
\end{align*}
for each $f \in \overline \Sigma(a, b)$ and $g \in \overline \Sigma(b, c)$, we
have $g \circ f \in \overline \Sigma(a, c)$; for each $a \in \overline X$, we
have $\id_a \in \overline \Sigma(a,a)$; and for each $f \in \overline \Sigma(a,
b)$ and $g \in \overline \Sigma(c, d)$, we have $f \otimes g \in \overline
\Sigma(a \otimes b,\, c \otimes d)$ and $f \multimap g \in \overline \Sigma(c
\multimap b,\, a \multimap d)$. By an \emph{equation} over a \smc\ signature,
we mean a string of the form $u = v \colon a \to b$  for some $a, b \in
\overline X$ and $u, v \in \overline \Sigma(a, b)$; and by a \emph{syntactic
\smc\ theory} we mean an \smc\ signature $(X, \Sigma)$ together with a set $E$
of equations over it.

\begin{example}\hfill
\begin{itemize}
\item The syntactic theory of monoids has a single ground sort $x$, ground
    terms $e \in \Sigma(I, x)$ and $m \in \Sigma(x \otimes x, x)$, and
    three equations
\begin{align*}
    (m \circ (m \otimes \id_x)) \circ \alpha_{xxx} &= m \circ (\id_x \otimes m)
    \colon x \otimes (x \otimes x) \to x\\
    m \circ (e \otimes \id_x) &= \lambda_x \colon I \otimes x \to x\\
    m \circ (\id_x \otimes e) &= \rho_x \colon x \otimes I \to x\text.
\end{align*}
\item The syntactic theory of the linear lambda-calculus has a single
    ground sort $x$ and two terms, $\lambda \in \Sigma(x \multimap x, x)$
    and $@ \in \Sigma(x \otimes x, x)$. Its single equation is the
    $\beta$-rule
\begin{equation*}
    @ \circ (\lambda \otimes \id_x) = \epsilon_{xx} \colon (x \multimap x) \otimes x
    \to x\text.
\end{equation*}
\end{itemize}
\end{example}
Given a syntactic theory $\mathbb T$ and a \smc\ category $\D$, we may define a
notion of \emph{interpretation} $F \colon \mathbb T \to \D$. Such an $F$ is
given by a function $F_X \colon X \to \ob \D$ interpreting the ground types of
the theory, together with a family of functions
\begin{equation*}
    F_{a,b} \colon \Sigma(a,b) \to \D\big(\overline{F_X}(a), \overline{F_X}(b)\big)
    \qquad \text{(for $a,b \in \overline X$)}
\end{equation*}
interpreting the basic terms; here we write $\overline {F_X}$ for the unique
extension of $F_X$ to a function $\overline X \to \ob \D$ commuting with the
\smc\ type constructors. These data are required to satisfy each of the
equations of the theory, in the sense that
\begin{equation*}
    u = v \colon a \to b \text{ in $E$} \quad \Rightarrow \quad \overline {F_{a,b}}(u) = \overline{F_{a,b}}(v) \colon \overline{F_X}(a)
    \to \overline{F_X}(b) \text{ in $\D$}\text.
\end{equation*}
Here $\overline{F_{a,b}}$ denotes the unique extension of $F_{a,b}$ to a
function $\overline{\Sigma}(a,b) \to \D\big(\overline{F_X}(a),
\overline{F_X}(b)\big)$ commuting with the \smc\ term constructors.

\begin{example}\hfill
\begin{itemize}
\item An interpretation in $\D$ of the theory of monoids is a monoid in
    $\D$.
\item An interpretation in $\D$ of the theory of the linear lambda-calculus
    is given by an object $X \in \D$ and maps $\lambda \colon X \multimap X
    \to X$ and $@ \colon X \otimes X \to X$ rendering commutative the
    diagram
\begin{equation*}
    \cd{
      (X \multimap X) \otimes X \ar[r]^-{\lambda \otimes X}
      \ar[dr]_{\epsilon_{X,X}} &
      X \otimes X \ar[d]^{@} \\ & X\text.
    }
\end{equation*}
\end{itemize}
\end{example}
%It is easy to see that a strict \smc\ functor $\D \to \D'$ sends
%interpretations $\mathbb T \to \D$ to interpretations $\mathbb T \to \D'$, in
%such a way that we obtain a functor $\cat{Mod}_{\mathbb T} (\thg) \colon
%\cat{SMCCat} \to \cat{Cat}$.
\begin{property}
To each syntactic theory $\mathbb T = (X, \Sigma, E)$ we may assign a small
\smc\ category $\C_\mathbb T$ which classifies $\mathbb T$, in the sense that
there is a bijection, natural in $\D$, between interpretations $\mathbb T \to
\D$ and strict \smc\ functors $\C_\mathbb T \to \D$.
\end{property}
\begin{proof}
We take the set of objects of $\C_{\mathbb T}$ to be $\overline X$, and obtain
its homsets by quotienting the sets $\overline \Sigma(a, b)$ under the smallest
congruence which contains each equation in $E$; makes composition associative
and unital; makes $\otimes$ and $\multimap$ functorial in each variable; makes
$\alpha$, $\lambda$, $\rho$, $\sigma$, $\epsilon$ and $\eta$ natural in each
variable; makes the $\lambda^{-1}$'s, $\rho^{-1}$'s and $\alpha^{-1}$'s inverse
to the $\lambda$'s, $\rho$'s and $\alpha$'s; verifies the triangle identities
for $\eta$ and $\epsilon$; and verifies the symmetric monoidal category axioms
of Mac~Lane.
\end{proof}

Observe that different syntactic theories $\mathbb T$ and $\mathbb T'$
may give rise to the same classifying category $\C_\mathbb T =
\C_{\mathbb T'}$, and so have the same models. Thus, in the spirit of
categorical logic, one should view syntactic $\smc$ theories as
\emph{presentations} of their classifying categories; so that to
understand a syntactic theory $\mathbb T$ is really to understand the
category $\C_\mathbb T$. The purpose of this note is to improve this
understanding by giving a \emph{graphical representation} of
$\C_{\mathbb T}$, in which morphisms are viewed as certain equivalence
classes of diagrams.  In the case where our theory has no equations,
and our signature no operations, we are considering a mere set of
types $X$, and the corresponding \smc\ category $\C_X$ is the free
\smc\ category on $X$.  We have mentioned that in this case we want to
use \citegenitifs{Hughes:freestar} representation.  We will show that
this special case suffices to derive the general one. In fact, it will
suffice to derive the case of a \emph{free theory}---one given by a
signature $(X, \Sigma)$ subject to no equations---since the
classifying category of an arbitrary theory may be obtained by
quotienting out the morphisms of the classifying category of a free
theory, so that a graphical representation of the latter induces a
graphical representation of the former.

Given a free theory $(X, \Sigma)$, we will obtain a graphical representation of
the corresponding classifying category $\C_{X, \Sigma}$ by first describing it
in terms of $\C_X$, the free \smc\ category on $X$, and then making use of a
suitable graphical description of the latter. We begin by introducing some
notation. We define the \emph{typing function} $\ty \colon \Sigma \to \ob \C_X
= \overline X$ by $\ty(\alpha) = s(\alpha) \multimap t(\alpha)$, and extend
this to a function on $\Sigma^\ast$, the set of lists in $\Sigma$, by taking
\begin{gather*}
    \ty() = I\text, \qquad \ty(\alpha) = s(\alpha) \multimap t(\alpha)\text, \\
    \text{and} \qquad \ty(\alpha_1, \dots, \alpha_n) = \ty(\alpha_1, \dots, \alpha_{n-1}) \tens
    \ty(\alpha_n) \text{ for $n \geqslant 2$.}
\end{gather*}
%% Tom: here we should either say $n \geq 1$ or define the case n = 1.
%% Assuming we choose $n \geq 1$, there are a few places below where
%% we forget the $I$ coming from the empty list, which I have
%% corrected, maybe wrongly.
%%
%% Richard: yes, it is a bit ambiguous as it is. Perhaps slightly neater
%% (though less uniform) to define it as I have now done?
%%
Though we may not have equality between $\ty(\alpha_1, \ldots, \alpha_n) \tens
\ty(\beta_1, \dots, \beta_m)$ and $\ty(\alpha_1, \dots, \alpha_n, \beta_1,
\dots, \beta_m)$, we can at least build a canonical isomorphism between them in
$\C_X$ using the associativity and unitality constraints. Similarly, for
$\sigma$ a permutation on $n$ letters, we can construct canonical maps
\begin{equation*}
    \hat \sigma \colon \ty(\alpha_1, \dots, \alpha_n) \to \ty(\alpha_{\sigma(1)}, \dots, \alpha_{\sigma(n)})
\end{equation*}
using the symmetry isomorphisms of $\C_X$. We now define a category
$\C'_{X,\Sigma}$ of which the classifying category $\C_{X, \Sigma}$ will be a
quotient.
\begin{itemize}
    \item \textbf{Objects} are objects of $\C_X$; \item \textbf{Morphisms}
        $U \to
	V$ are given by a list $\Gamma \in \Sigma^\ast$ together with a
morphism 	
\begin{equation*} 	    \phi \colon \ty(\Gamma) \tens U \to V 	\end{equation*} 	
in $\C_X$.
    \item \textbf{Identity maps} $U \to U$ are given by the empty list $()$
	together with the canonical isomorphism $I \tens U \to U$;
    \item \textbf{Composition} of maps $(\Gamma, \phi) \colon U \to V$ and
	$(\Delta, \psi) \colon V \to W$ is given by the map $(\Delta +
\Gamma,\, \xi) \colon U \to W$, wherein $\Delta + \Gamma$ is the
concatenation of the two lists, and 	$\xi$ is the composite morphism
	%% Tom: is it worth noticing that associativity follows from monoidal
	%% coherence?
	%% Richard: I toyed with the idea but decided it would probably distract the flow a bit too much.
	$$ \newdiagramgrid{rect1}{1.4,1}{1,1} 	
\begin{diagram}[grid=rect1,height=20pt,width=50pt]
	    \ty(\Delta + \Gamma) \tens U & & W \\
	    \dTo^{\cong} & & \uTo_{\psi} \\
	    \ty(\Delta) \tens \big(\ty(\Gamma) \tens U\big) & \rTo^{\ty(\Delta)\,
\tens\, \phi} & \ty(\Delta) \tens V 	\end{diagram} $$
\end{itemize}
The category $\C'_{X,\Sigma}$ admits an embedding functor $i \colon \C_X \to
\C'_{X,\Sigma}$, which is the identity on objects, and on morphisms sends a map
$\phi \colon U \to V$ to the pair of the empty list $()$ together with the
composite
$$ \begin{diagram}
    I \tens U & \rTo^{\cong} & U & \rTo^{\phi} V \text{.}
\end{diagram} $$
It also admits a tensor operation, which on objects is inherited from $\C_X$;
and on morphisms takes a pair of maps $(\Gamma, \phi) \colon U \to V$ and
$(\Gamma', \phi') \colon U' \to V'$ to the map $(\Gamma + \Gamma', \theta)
\colon U \tens U' \to V \tens V'$, where $\theta$ is the composite
$$ \begin{diagram}[width=20pt,height=15pt]
    \ty(\Gamma + \Gamma') \tens (U \tens U') & & & & V \tens V' \text{.} \\
    & \rdTo^{\cong} & & \ruTo^{\phi\, \tens\, \phi'} & \\
    & & (\ty(\Gamma) \tens U) \tens (\ty(\Gamma')\tens U') & &
\end{diagram} $$
However, this tensor operation does not underlie a tensor product in the usual
sense; for whilst functorial in each variable separately, it does not satisfy
the compatibility conditions required to obtain a functor of two variables.
These require the commutativity of squares of the form
\begin{equation} \label{squares}
    \begin{diagram}[width=40pt,height=20pt]
	U \tens U' & \rTo^{f\, \tens\, U'} & V \tens U' \\
	\dTo^{U\, \tens\, f'} & & \dTo_{V\, \tens\, f'}  \\
	U \tens V' & \rTo^{f\, \tens\, V'} & V \tens V' \text{;}
    \end{diagram}
\end{equation}
but we see from the definitions that, for $f = (\Gamma, \phi)$ and $f' =
(\Gamma', \phi')$ as above, the upper composite in~\eqref{squares} has its
first component given by $\Gamma' + \Gamma$, whilst the lower has it given by
$\Gamma + \Gamma'$; so that $\C'_{X,\Sigma}$ is not a \smc\ category.
Nonetheless, we do have that:

\begin{property}
    $\C'_{X,\Sigma}$ is a symmetric premonoidal category in the sense of
    \cite{power:premonoidal}, and the embedding $i \colon \C_X \to \C'_{X,\Sigma}$ is a strict symmetric premonoidal functor.
\end{property}
\begin{proof}
    Beyond the structure we have already noted, this means that $\C'_{X,\Sigma}$
    comes equipped with a unit object, which we take to be $I$, the unit object of
    $\C_X$; and with isomorphisms of associativity, unitality and symmetry of the
    same form as those for a symmetric monoidal category, but differing from them
    in two aspects. First, they need only be natural in each variable separately;
    so for symmetry, for instance, we only require diagrams of the following form
    to commute:
    $$ \begin{diagram}[width=35pt]
	U \tens V  & \rTo^{U\, \tens\, g} & U \tens V' \\
	\dTo~{\sigma_{U, V}} & & \dTo~{\sigma_{U, V'}} \\
	V \tens U & \rTo^{g\, \tens\, U} & V' \tens U
    \end{diagram}
    \quad \text{and} \quad
    \begin{diagram}[width=35pt]
	U \tens V & \rTo^{f\, \tens\, V} & U' \tens V  \\
	\dTo~{\sigma_{U, V}} & & \dTo~{\sigma_{U', V}} \\
	V \tens U & \rTo^{V\, \tens\, f} & V \tens U' \text{.}
    \end{diagram} $$
    Secondly, the constraint isomorphisms are required to be \emph{central} maps, where $f
    \colon U \to V$ is said to be \emph{central} just when for each $f' \colon U'
    \to V'$, the diagram \eqref{squares} and its dual
    %% (with $U$ and $U'$
    %% transposed)
    %% Tom: the word "transposed" did not help me much here.
    %% So here is the transpose
    %%
    %% Richard: OK! Removed superfluous equation number
    %%
    $$ \begin{diagram}[width=40pt,height=20pt]
	U' \tens U & \rTo^{U'\, \tens\, f} & U' \tens V \\
	\dTo^{f'\, \tens\, U} & &  \dTo_{f'\, \tens\, V}  \\
	V' \tens U & \rTo^{V'\, \tens\, f} & V' \tens V
    \end{diagram} $$
    are rendered commutative.
    In the case of $\C'_{X,\Sigma}$, we fulfil these demands by taking each coherence
    constraint in $\C'_{X,\Sigma}$ to be the image of the corresponding coherence
    constraint in $\C_X$ under $i \colon \C_X \to \C'_{X,\Sigma}$. Naturality in each
    variable is easily checked; whilst centrality follows by observing that a map
    of $\C'_{X,\Sigma}$ is central iff it lies in the image of the aforementioned
    embedding.
    Finally, we observe that the embedding $i \colon \C_X \to \C'_{X,\Sigma}$ preserves
    all the structure of $\C_X$ on the nose, and sends central maps to central maps;
    and so is strict symmetric premonoidal.
\end{proof}

In fact, $\C'_{X,\Sigma}$ is \emph{closed} as a premonoidal category in the
sense that for each $V \in \C'_{X,\Sigma}$, the endofunctor $(\thg) \tens V$
has a right adjoint $V \multimap (\thg)$ which preserves central maps, with the
units and counits
\begin{equation*}
    U \ \longrightarrow \ V \multimap (U \tens V) \qquad \text{and} \qquad (V \multimap W) \tens V \ \longrightarrow \ W
\end{equation*}
of these adjunctions being central. Indeed, we may take the action of $V
\multimap (\thg)$ on objects to be given as in $\C_X$; and then we have:
\begin{align*}
    \C'_{X,\Sigma}(U \tens V,\, W) &= \Coprod{\Gamma \in \Sigma^\ast} \C_X\big(\ty(\Gamma) \tens (U \tens V),\, W\big) \\
    & \cong \Coprod{\Gamma \in \Sigma^\ast} \C_X\big(\ty(\Gamma) \tens U,\, V \multimap W\big) \\
    & = \C'_{X,\Sigma}(U,\, V \multimap W)\text,
\end{align*}
naturally in $U$ and $W$, as desired. The centrality requirements now amount to
the fact that the adjunctions $${(\thg) \tens V}\dashv\, {V \multimap (\thg)}
\colon \C'_{X,\Sigma} \to \C'_{X,\Sigma}$$ may be restricted and corestricted
to adjunctions $${(\thg) \tens V }\dashv\, {V \multimap (\thg)} \colon \C_X \to
\C_X.$$

%\section{The generic model}\label{sec:generic}
The reason that $\C'_{X,\Sigma}$ is only premonoidal rather than monoidal is
that its morphisms are built from a \emph{list}, rather than a \emph{multiset}
of generating operations: in computational terms, we may think that a morphism
``remembers the order in which its generating operations are executed''. To
rectify this, we quotient out the morphisms of $\C'_{X,\Sigma}$ by the action
of the symmetric groups; the result will be the \smc\ category $\C_{X,\Sigma}$
we seek. So let there be given a list $\Gamma = (\alpha_1, \dots, \alpha_n) \in
\Sigma^\ast$, a permutation $\sigma \in S_n$, and a morphism $\phi \colon
\ty(\sigma \Gamma) \tens U \to V$ in $\C_X$, where $\sigma \Gamma$ is the list
$(\alpha_{\sigma(1)}, \dots, \alpha_{\sigma(n)})$. A generating element for our
congruence $\sim$ on the morphisms of $\C'_{X,\Sigma}$ is now given by
\begin{equation*}
    (\sigma\Gamma,\, \phi) \quad \sim \quad (\Gamma,\, \phi \circ (\hat \sigma \tens U))
\end{equation*}
where we recall that $\hat \sigma$ is the canonical morphism $\ty(\Gamma) \to
\ty(\sigma \Gamma)$ built from symmetry and associativity maps in $\C_X$. We
may now verify that for morphisms
$$ \begin{diagram}
    U & \rTo^{f} & V & \pile{\rTo^{g} \\ \rTo_{h}} & W & \rTo^{k} & Z
\end{diagram} $$
in $\C'_{X,\Sigma}$, $g \sim h$ implies both $gf \sim hf$ and $kg \sim kh$, so
that $\sim$ is a congruence on $\C'_{X,\Sigma}$, and we may define the category
$\C_{X,\Sigma}$ to be the quotient of $\C'_{X,\Sigma}$ by $\sim$.

\begin{property}
    $\C_{X,\Sigma}$ is a symmetric monoidal closed category, and the quotient map $q
    \colon \C'_{X,\Sigma} \to \C_{X,\Sigma}$ is a strict symmetric premonoidal
    functor.
\end{property}
\begin{proof}
    Straightforward checking shows that if $f \sim f'$ and $g \sim g'$ in $\C'_{X,\Sigma}$, then $f \tens g \sim f' \tens g'$, so that the tensor operation
    on $\C'_{X,\Sigma}$ passes to the quotient $\C_{X,\Sigma}$. For this operation to
    define a bifunctor on $\C_{X,\Sigma}$, we must verify that squares of the form
    \eqref{squares} commute in $\C_{X,\Sigma}$: and this follows by checking that
    \begin{equation*}
	(f \tens V') \circ (U \tens f') \sim (V \tens f') \circ (f \tens U')
    \end{equation*}
    in $\C'_{X,\Sigma}$. This defines our binary tensor on $\C_{X,\Sigma}$; whilst the
    nullary tensor we inherit from $\C'_{X,\Sigma}$.
    The associativity, unitality and symmetry constraints in the category $\C_{X,\Sigma}$ are
    obtained as the image of the corresponding constraints in $\C'_{X,\Sigma}$ under
    the quotient map. Commutativity of the triangle, pentagon and hexagon axioms is
    inherited; whilst the (restricted) naturality of these maps in $\C'_{X,\Sigma}$
    %% Tom: "implies" seems a bit strong, how about "becomes"?
    %% Richard: OK!
    becomes their (full) naturality in $\C_{X,\Sigma}$. Thus $\C_{X,\Sigma}$ is
    symmetric monoidal.
    It is now easy to check that the isomorphisms $\C'_{X,\Sigma}(U \tens V,\, W)
    \cong \C'_{X,\Sigma}(U,\, V \multimap W)$ descend along the quotient map,
    and so induce a closed structure on $\C_{X,\Sigma}$. Finally, since each piece
    of structure on $\C_{X,\Sigma}$ is obtained from the corresponding piece of
    structure on $\C'_{X,\Sigma}$, the quotient map $q \colon \C'_{X,\Sigma} \to \C_{X,\Sigma}$ is strict symmetric premonoidal as required.
\end{proof}
%%%
%%% Richard: added a remark here.
%%%
Observe that the composite functor $qi \colon \C_X \to \C_{X,\Sigma}$, is a
strict symmetric premonoidal closed functor between two symmetric monoidal
closed categories; and as such, is actually a strict symmetric \emph{monoidal}
closed functor. We make use of this fact below.

\begin{theorem}\label{thm:coherence}
    $\C_{X,\Sigma}$ is the classifying category of the syntactic theory with
    signature $(X, \Sigma)$ and no equations.
\end{theorem}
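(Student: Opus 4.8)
The plan is to verify directly the universal property defining the classifying category, exploiting the strict \smc\ functor $qi \colon \C_X \to \C_{X,\Sigma}$ together with the fact, available to us, that $\C_X$ is the \emph{free} \smc\ category on the set $X$. An interpretation of the equation-free theory $(X,\Sigma)$ in a \smc\ category $\D$ consists of a function $F_X \colon X \to \ob \D$ and a family $F_{a,b} \colon \Sigma(a,b) \to \D(\overline{F_X}(a), \overline{F_X}(b))$. By freeness of $\C_X$, the first datum is the same thing as a strict \smc\ functor $\hat F \colon \C_X \to \D$ (which agrees with $\overline{F_X}$ on objects); and by currying in the closed category $\D$, the second is the same thing as a family of global elements $\name{F(\alpha)} \colon I \to \hat F(\ty \alpha)$ indexed by $\alpha \in \Sigma$. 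I would thus show that strict \smc\ functors $\C_{X,\Sigma} \to \D$ correspond precisely to such pairs $(\hat F, (\name{F(\alpha)})_\alpha)$, naturally in $\D$.

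For the forward direction, given a strict \smc\ functor $G \colon \C_{X,\Sigma} \to \D$, I set $\hat F = G \rond qi$, recovering $F_X$ by restriction to $X$; and for each $\alpha \in \Sigma(a,b)$ I read off $F_{a,b}(\alpha)$ as the image under $G$ of the \emph{generator morphism} $g_\alpha \eqdef ((\alpha), \epsilon_{a,b}) \colon a \to b$, whose second component is the evaluation map $\epsilon_{a,b} \colon \ty(\alpha) \tens a = (a \multimap b) \tens a \to b$ of $\C_X$. Equivalently, one records the name $N_\alpha \eqdef ((\alpha), \rho) \colon I \to \ty(\alpha)$, which is the transpose of $g_\alpha$ under the closed structure, and sets $\name{F(\alpha)} = G(N_\alpha)$.

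For the backward direction --- which carries the bulk of the work --- I start from $(\hat F, (\name{F(\alpha)})_\alpha)$ and first build a functor $G'$ on $\C'_{X,\Sigma}$ before descending to the quotient. Given $(\Gamma, \phi) \colon U \to V$ with $\Gamma = (\alpha_1, \dots, \alpha_n)$, I tensor the names $\name{F(\alpha_i)}$ together to obtain a global element $\name{F(\Gamma)} \colon I \to \hat F(\ty \Gamma)$, and define $G'(\Gamma, \phi)$ to be the composite $\hat F(U) \iso I \tens \hat F(U) \to \hat F(\ty\Gamma) \tens \hat F(U) \to \hat F(V)$ formed from $\name{F(\Gamma)} \tens \id$ followed by $\hat F(\phi)$. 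Functoriality and strict premonoidality of $G'$ reduce, through the explicit formulas for identities, composition and tensor in $\C'_{X,\Sigma}$, to coherence calculations relating $\name{F(\Delta + \Gamma)}$ to $\name{F(\Delta)} \tens \name{F(\Gamma)}$ via the associativity and unitality constraints. Crucially, $G'$ respects the congruence $\sim$: since naturality of the symmetry of $\D$ yields $\name{F(\sigma\Gamma)} = \hat F(\hat\sigma) \rond \name{F(\Gamma)}$, strictness of $\hat F$ gives $G'(\sigma\Gamma, \phi) = \hat F(\phi)\rond(\hat F(\hat\sigma)\tens \id)\rond(\name{F(\Gamma)}\tens\id)\rond\lambda^{-1} = G'(\Gamma, \phi \rond (\hat\sigma \tens U))$, so that $G'$ factors through $q$ as the desired strict \smc\ functor $G \colon \C_{X,\Sigma} \to \D$.

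Finally, I would check that the two assignments are mutually inverse. One composite is immediate from freeness of $\C_X$ and the defining property of currying, which together give $G(N_\alpha) = \name{F(\alpha)}$ and $G(g_\alpha) = F_{a,b}(\alpha)$. For the other, I use that $\C_{X,\Sigma}$ is generated, as an \smc\ category, by the image of $qi$ together with the morphisms $q(g_\alpha)$: indeed every $q(\Gamma, \phi)$ factors as $qi(\phi)$ precomposed with a tensor of the names $N_{\alpha_i}$ and a unit constraint. Two strict \smc\ functors agreeing on these generators therefore coincide, and naturality in $\D$ is routine. I expect the main obstacle to be precisely the backward construction: the bookkeeping required to show that $G'$ is functorial and strictly premonoidal, where the mismatch between list concatenation and the coherence isomorphisms of $\C_X$ must be reconciled --- the congruence $\sim$ having been introduced exactly to absorb the residual permutation data that obstructs genuine monoidality.
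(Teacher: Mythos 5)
Your proposal is correct and follows essentially the same route as the paper: both use freeness of $\C_X$ to handle the ground types, extract the interpretation via the generator morphisms $q\big((\alpha), \epsilon_{ab}\big)$, reconstruct an arbitrary morphism of $\C_{X,\Sigma}$ from the names $I \to \ty(\alpha_i)$ tensored together with a unit constraint followed by $qi(\phi)$, and reduce well-definedness (your descent of $G'$ along $q$) to the same symmetry-coherence computation in $\D$. The only difference is narrative order --- the paper argues uniqueness first (the forced values of $F$) and then checks existence, whereas you construct the functor on $\C'_{X,\Sigma}$ first and obtain uniqueness afterwards by a generation argument --- which is immaterial.
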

\begin{proof}
    Suppose first given a strict \smc\ functor $F \colon \C_{X,\Sigma} \to \D$; we obtain an
    interpretation $G \colon (X, \Sigma) \to \D$ by taking
    \begin{equation}\label{eqs}
        G_X(x) = F(x) \quad \text{and} \quad G_{a,b}(\alpha) = F[\alpha] \colon Fa \to
        Fb\text,
    \end{equation}
    where, for $\alpha \in \Sigma(a,b)$, the morphism $[\alpha] \colon a \to b$  of $\C_{X, \Sigma}$ is given by $q \big((\alpha),
    \epsilon_{ab}\big)$.
    %%
    %% Tom: the = has to be an iso, right?
    %% \begin{equation*}
    %%     \phi \colon \ty(\alpha) \tens s(\alpha) = (s(\alpha) \multimap t(\alpha)) \tens s(\alpha) \xrightarrow{\textsf{ev}} t(\alpha)
    %% \end{equation*}
    %%
    %% Richard: Reverted, as with the above definition of ty it really DOES come out as an equality.
    %%
    Conversely, we must show that each interpretation $G \colon (X, \Sigma) \to \D$ lifts to a unique strict \smc\ functor $F \colon \C_{X,\Sigma} \to \D$
    satisfying \eqref{eqs}. The action of $G$ on ground types is given by a function $G_X \colon X \to \ob
    \D$; and this is equally well a functor $G_X \colon X \to \D$---with $X$ regarded now as a
    discrete category---which, as $\C_X$ is the free \smc\ category on $X$, lifts to a strict \smc\ functor
    $\tilde G_X \colon \C_X \to \D$.
%    First note that from the interpretation $G$ we have a map $G_X \colon X \to \D$; and so
%    because $\C_X$ is the free \smc\ category on $X$, there is a unique
%    \smc\ functor $G \colon \C_X \to \D$ making the triangle
%    $$ \begin{diagram}[nohug,width=50pt,height=20pt]
%	& & X & & \\
%	& \ldTo^{\eta} & & \rdTo^{\tilde F_X} & \\
%	\ob \C_X & & \rTo^{\ob G} & & \ob \D
%    \end{diagram} $$
%    commute. But we may factorise this diagram as
%    $$ \begin{diagram}[nohug,width=50pt,height=20pt]
%	& & X & & \\
%	& \ldTo^{\eta} & \dTo^{\eta} & \rdTo^{\tilde F_X} & \\
%	\ob \C_X & \rTo^{\id = \ob (qi)} & \ob \C_{X,\Sigma} & \rTo^{\ob F} & \ob
%\D \text{,}
%    \end{diagram} $$
%    wherein both smaller triangles commute: the left one trivially, and the right
%    one by the first equation in~\eqref{eqs}. Thus, since $G$ is
%    the \emph{unique} \smc\ functor making the outside triangle commute, we
    It follows that $F$, if it exists, must makes the following diagram of strict \smc\
    functors commute:
    \begin{equation}\label{newt}
	\begin{diagram}[nohug,width=40pt,height=20pt]
	    & & \C_{X,\Sigma} & & \\
	    & \ruTo^{qi} & & \rdTo^{F} & \\
	    \C_X & & \rTo^{\tilde G_X} & & \C\text. 	\end{diagram}
    \end{equation}
    Indeed, to ask that the first equation in~\eqref{eqs} should hold
    is equally well to ask that~\eqref{newt} should commute when precomposed with
    the functor $\eta \colon X \to \C_X$ exhibiting $\C_X$ as free on $X$; and
    by the uniqueness part of the universal property of $\C_X$, this is equally
    well to ask~\eqref{newt} itself to commute. This determines the action of $F$ on
    the objects and certain of the morphisms of $\C_{X,\Sigma}$; let us now extend
    this to deal with an arbitrary morphism $f \colon U \to V$. If $f$ is
    represented by some $(\Gamma, \phi)$ in $\C'_{X,\Sigma}$, then we may factorise
    it as
    $$
    \newdiagramgrid{line8}{1,1.5,1.5,1}{}
    \begin{diagram}[grid=line8,width=30pt]
	U & \rTo^{q(\Gamma, \id)} & \ty(\Gamma) \tens U & \rTo^{q(i(\phi))} & V
    \end{diagram} $$
    in $\C_{X,\Sigma}$; and commutativity in \eqref{newt} forces $F$, if it exists, to send the
    second part of this factorisation to $\tilde G_X(\phi)$. For the first
    %%
    %% Richard: corrected this part to deal with the case n = 0
    %%
    part, either we have $\Gamma$ empty, in which case $q(\Gamma, id)$ is the unit
    isomorphism $U \cong I \tens U$; or we have $\Gamma = (\alpha_1, \dots,
    \alpha_n)$, in which case $q(\Gamma, \id)$ decomposes as %%
    %% Tom: again, we miss the $I$.
    %% Richard: reverted, for the definition of ty given above this is OK I think
    %% Aurelien: with Bigtens in order to shorten the diagram
    $$
    \newdiagramgrid{line9}{0.9,1.8,2.8,2.5}{}
    \begin{diagram}[grid=line9,width=25pt]
	U & \rTo^{\cong} & (\Bigtens{1 \leq i \leq n} I\ ) \tens U &
\rTo^{\bigtens_i \overline{[\alpha_i]}\, \tens\, U} & \ty(\Gamma) \tens U
\text{,}
    \end{diagram} $$
    where $\overline{[\alpha_i]} \colon I \to \ty(\alpha_i)$ is the
    exponential transpose of $[\alpha_i] \colon s(\alpha_i) \to t(\alpha_i)$ in $\C_{X, \Sigma}$. But since we require $F$, if it exists, to both satisfy
    the second equation in~\eqref{eqs} and strictly preserve the \smc\ structure, this
    determines its value on $q(\Gamma, \id)$; and hence on an arbitrary morphism of $\C_{X,\Sigma}$.
    Consequently, there is at most one strict \smc\ functor
    $F \colon \C_{X,\Sigma} \to \C$
    satisfying the equations in~\eqref{eqs}; and
    in order to conclude that there is \emph{exactly} one such, we must check that
    the assignations described above underlie a well-defined strict \smc\ functor $F$. This follows by
    straightforward calculation: as a representative sample of which, we verify that $F$
    as given above is well-defined on morphisms.
    So let there be given $f \colon U \to V$ in $\C_{X,\Sigma}$, together with two
    morphisms $(\sigma\Gamma,\, \phi)$ and $(\Gamma,\, \phi \circ (\hat \sigma
    \tens U))$ of $\C'_{X,\Sigma}$ which represent it. Then we have the following
    commutative diagram in $\C_{X,\Sigma}$:
    $$ \begin{diagram}[nohug,width=40pt,height=15pt]
	& & \ty(\Gamma) \tens U  & &\\
	& \ruTo^{q(\Gamma, \id)} & & \rdTo^{qi(\phi \circ (\hat \sigma \tens U))} & \\
	U & & \dTo~{qi(\hat \sigma \tens U)} & & V \\
	& \rdTo_{q(\sigma\Gamma, \id)} & & \ruTo_{qi(\phi)} & \\
	& & \ty(\sigma \Gamma) \tens U & &
    \end{diagram} $$
    %\begin{equation*}
    %    \cd{
    %        & \ty(\Gamma) \tens U \ar[dr]^{qi(\phi \circ (\hat \sigma \tens U))} \ar[dd]|{qi(\hat \sigma \tens U)} \\
    %    U
    %        \ar[ur]^{q(\Gamma, \id)}
    %        \ar[dr]_{q(\sigma\Gamma, \id)} & &
    %    V\text, \\ &
    %    \ty(\sigma \Gamma) \tens U
    %        \ar[ur]_{qi(\phi)}
    %    }
    %\end{equation*}
    and must show that the corresponding diagram commutes when we apply
    $F$. This is clear for the right-hand triangle; whilst for the
    left-hand one, it amounts to checking the following equality in $\D$:
    $$
    \newdiagramgrid{pent}{1,1,1,1}{0.5,0.5,1,1}
    \begin{diagram}[grid=pent,nohug,width=35pt]
	& & FU & & \\
	& \ldTo^{\cong} & & \rdTo^{\cong} & \\
	(\Bigtens{1 \leq i \leq n} I\ ) \tens FU & & & & (\Bigtens{1 \leq i \leq n} I\ ) \tens FU \\
	\dTo~{\bigtens_i \overline{[\alpha_i]}\, \tens\, FU} & & & & \dTo~{\bigtens_i \overline{[\alpha_{\sigma i}]}\, \tens\, FU} \\
	F(\ty(\Gamma)) \tens FU & & \rTo^{F \hat \sigma\, \tens\, FU} & &
F(\ty(\sigma \Gamma)) \tens FU \text{,}
    \end{diagram} $$
    which follows immediately from the symmetric monoidal closed category
    axioms.    The remaining calculations proceed similarly.
\end{proof}

Finally in this section, we consider the case of a general theory $\mathbb{T} = (X, \Sigma, E)$. 
Let $\C_\theory$ be the quotient of $\C_{X,\Sigma}$ by the smallest congruence $\sim$ 
which contains all the equations in $E$ and respects the $\smc$ structure.
We have:
\begin{theorem}
      $\C_{\mathbb{T}}$ is the classifying category of the theory $\mathbb{T}$.
\end{theorem}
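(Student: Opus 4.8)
The plan is to reduce this general statement to Theorem~\ref{thm:coherence} together with the universal property of the quotient $p \colon \C_{X,\Sigma} \to \C_\theory$. The guiding observation is that an interpretation of $\theory = (X,\Sigma,E)$ in a \smc\ category $\D$ is nothing but an interpretation of the equation-free theory $(X,\Sigma)$ that additionally satisfies every equation of $E$. I would therefore exhibit the required bijection, natural in $\D$, as a composite of two: the bijection of Theorem~\ref{thm:coherence} between interpretations of $(X,\Sigma)$ in $\D$ and strict \smc\ functors $\C_{X,\Sigma} \to \D$, followed by a bijection supplied by the universal property of $p$. Since the statement already stipulates that $\sim$ respects the \smc\ structure, $\C_\theory$ carries an induced \smc\ structure making $p$ a strict \smc\ functor, by the same inheritance argument used for the earlier quotients.

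First I would pin down what the equations of $E$ become across the bijection of Theorem~\ref{thm:coherence}. Taking $\D = \C_{X,\Sigma}$ and the identity functor yields a \emph{generic interpretation} of $(X,\Sigma)$, under which each derived term $u \in \overline\Sigma(a,b)$ names a morphism $[u] \colon a \to b$ of $\C_{X,\Sigma}$. For an arbitrary strict \smc\ functor $F \colon \C_{X,\Sigma} \to \D$ with associated interpretation $G$, strictness ensures $\overline{G_{a,b}}(u) = F([u])$, since the extension $\overline{G_{a,b}}$ agrees with applying $F$ to the generic extension $[\thg]$, both commuting with the \smc\ term constructors. Hence $G$ satisfies an equation $u = v \colon a \to b$ of $E$ if and only if $F$ identifies the parallel pair $[u], [v] \colon a \to b$; that is, $G$ satisfies all of $E$ exactly when $F$ coequalises the set of pairs $\{([u],[v]) : (u=v) \in E\}$.

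It then remains to show that strict \smc\ functors out of $\C_{X,\Sigma}$ coequalising these pairs correspond bijectively, naturally in $\D$, to strict \smc\ functors out of $\C_\theory$; this is the universal property of the quotient, and is where the real work lies. By construction $\C_\theory = \C_{X,\Sigma}/{\sim}$, with $\sim$ the smallest congruence containing the pairs $([u],[v])$ and respecting the \smc\ structure. Given $F \colon \C_{X,\Sigma} \to \D$ coequalising the pairs of $E$, its \emph{kernel}---the relation identifying $g, g'$ whenever $Fg = Fg'$---is a congruence that respects the \smc\ structure, precisely because $F$ is strict \smc\ and $\D$ is a genuine \smc\ category, so that $F$ preserves $\tens$, $\impll$, composition and the coherence constraints on the nose; as this kernel contains the generating pairs of $E$, it contains $\sim$. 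Consequently $F$ is constant on $\sim$-classes and factors as $F = \bar F \circ p$ for a unique $\bar F \colon \C_\theory \to \D$, and $\bar F$ is itself strict \smc\ because $p$ is identity-on-objects, full and strict \smc, so that every piece of \smc\ structure of $\C_\theory$ is the $p$-image of the corresponding structure of $\C_{X,\Sigma}$, which $F$ preserves strictly; uniqueness of $\bar F$ follows from fullness of $p$.

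Composing the two bijections gives the desired correspondence between interpretations $\theory \to \D$ and strict \smc\ functors $\C_\theory \to \D$; naturality in $\D$ is routine, since each constituent bijection is implemented by composition with a fixed functor---the generic interpretation $(X,\Sigma) \to \C_{X,\Sigma}$ for the first, and $p$ for the second---and such composites commute with postcomposition by any strict \smc\ functor $\D \to \D'$. I expect the main obstacle to be the universal property of the preceding paragraph: specifically, verifying carefully that the kernel of a strict \smc\ functor is a \smc-respecting congruence, so that closing $E$ under the \smc\ structure to form $\sim$ imposes no constraint on $F$ beyond coequalising $E$ itself. Everything else reduces to Theorem~\ref{thm:coherence} and bookkeeping.
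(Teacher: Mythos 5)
Your proof is correct, and it follows exactly the route the paper intends: the paper states this theorem without proof, having already remarked that the classifying category of a general theory is obtained by quotienting the classifying category of the free theory, which is precisely the reduction to Theorem~\ref{thm:coherence} plus the universal property of the quotient that you carry out. Your two elaborations---interpreting the equations of $E$ as parallel pairs $[u], [v]$ in $\C_{X,\Sigma}$ via the generic interpretation and the uniqueness of extensions $\overline{G_{a,b}}$, and verifying that the kernel of a strict \smc\ functor is an \smc-respecting congruence so that factoring through $\C_{\mathbb{T}}$ costs nothing beyond coequalising the pairs from $E$---correctly supply the details the paper leaves implicit.
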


In fact, using a linear analogue of \citegenitif{LambekScott}
\emph{functional completeness}, we may give a more direct characterisation of the congruence $\sim$.
Here we write $\name f \colon I \to a \impll b$ to denote the currying of any map $f \colon a \to b$.
\begin{property}\label{lemma:fcomp} We obtain $\sim$ as the smallest equivalence relation generated by $\sim_1$, where
  $f \sim_1 g \colon a \to b$ just when there exists an equation $u = v
  \colon c \to d$ in $E$ and map $h$ such that $f$ is
\begin{diagram}
    a & \rTo^{\iso} & I \tens a & \rTo^{\name{u} \tens a} & (c \impll d)
    \tens a & \rTo^{h} & b
  \end{diagram}
  and replacing $u$ with $v$ yields $g$.
\end{property}

\section{A graphical representation of the classifying category}
Putting Theorem~\ref{thm:coherence} together with
\citet{Hughes:freestar}'s graphical description of $\C_X$, we obtain
the following graphical representation of the category $\C_{X,
  \Sigma}$. First, for each type $a \in \overline X$, we define the
\emph{ports} of $a$ to be the set of leaf occurrences in it, which may
either be of type $I$, or of ground types $x \in X$. Ports are signed
\emph{positive} when they are reached by passing to the left of an
even number of $\multimap$, and \emph{negative} otherwise. We let
$a^+$ and $a^-$ denote the sets of positive and negative ports of $a$,
respectively. We define a \emph{support} to be a finite set labelled
by elements of $\Sigma$. The \emph{ports} of a support $C$ are defined
by
\begin{center}
    $C^+ = \coprod_{c \in C} (\ty (\alpha_c))^+$
    \hfil and \hfil
    $C^- = \coprod_{c \in C} (\ty (\alpha_c))^-,$
\end{center}
where $\alpha_c$ is the label of $c$.  We now define the category 
$\D^0_{X, \Sigma}$ of $(X, \Sigma)$-\emph{prenets} to have:
\begin{itemize}
    \item \textbf{Objects} being elements of $\overline X$.
    \item \textbf{Morphisms} $a \to b$ being given by a support $C$
        together with a directed graph $G$, whose vertices are the disjoint
        union of the
ports of $a$, $b$ and $C$; and whose edges are such that the incidence
relation is the graph of a partial function
	\begin{equation}
	    g \colon a^+ + C^+ + b^- \rightharpoonup a^- + C^- + b^+,\label{eq:pleafun}
	\end{equation}
	that restricts to a bijection of $x$-labeled ports for each $x
        \in X$.  We consider morphisms equivalent up to the choice of
        support (replacing $C$ with isomorphic $C'$, preserving $g$).
    \item \textbf{Identity maps} $a \to a$ being given by the empty support
        together with the identity graph.
    \item \textbf{Composition} of maps $(C, G) \colon a \to b$ and $(D, H)
	\colon b \to c$ being given by the map $(C + D,\, G +_b H)
	\colon a \to c$, where $G +_b H$ is obtained by glueing the graphs $G$
and $H$ together along the ports of $b$. More formally, if $x \in G$ and $z
\in H$, then $G +_b H$ will have an edge $x \to z$ whenever there exist
ports $y_1, \dots, y_k$ of $b$ and edges
\begin{equation*}
\cd[@-1em@R-1em]{
    x \ar[r] & y_1 & y_2 \ar[r] & y_3 & \ \ \ \dots \ \ \ & y_{k-1} \ar[r] & y_k & & \text{in $G$}\\
    \text{and} & y_1 \ar[r] & y_2 & y_3 \ar[r] &  \ \ \ \dots \ \ \ \ar[r] & y_{k-1} & y_k \ar[r] & z & \text{in $H$.}\\
}
\end{equation*}
There are three analogous cases when:
\begin{itemize}\item
  $x \in H$ and $z \in G$,
\item $x, z \in H$, or
\item $x, z \in G$.
\end{itemize}
\end{itemize}

We now consider the subcategory $\D^1_{X, \Sigma}$ of $(X,
\Sigma)$-\emph{nets} with the same objects, but whose morphisms are
\emph{correct} prenets in the following sense.  First, for any \imll\
formula $a$, let $a'$ be its representation in classical MLL, i.e.,
using $\tens$, $\parr$, $I$, $\bot$, and signed ground types $x$ and
$\nop{x}$; in particular, $(a \impll b)' = \nop{a'}
\parr b'$. Now by a \emph{switching} of $a'$, we mean a graph obtained by cutting
exactly one premise of each $\parr$ node in the abstract syntax tree of $a'$;
and by a \emph{switching} of a $(X, \Sigma)$-prenet $(C, G) \colon a \to b$, we
mean a graph obtained by gluing along the ports:
\begin{itemize}
\item A switching of $\nop{a'}$;
\item A switching of $b'$;
\item A switching of each $\nop{\alpha_c'}$ (where $\alpha_c$ is the label
    of $c \in C$); and
\item The graph $G$ (forgetting the orientation).
\end{itemize}
The prenet $(C, G)$ is said to be \emph{correct}, or a \emph{net}, just when
all its switchings are trees. The nets $a \to b$ are in close correspondence
with the morphisms $a \to b$ in the free \smc\ category $\C_{X, \Sigma}$. To
see this, suppose given a net $(C, G) \colon a \to b$ whose support is a finite
set $\{1, \dots, n\}$. If we define $\Gamma = (\alpha_1, \ldots, \alpha_n)$
then we have $C^+ = \coprod_{1 \leq i
  \leq n} (\ty (\alpha_i))^+ \iso (\ty (\Gamma))^+$ and $C^-
\iso (\ty (\Gamma))^-$; and we claim that the composite partial function
\begin{diagram}
  (\ty (\Gamma) \tens a)^+ + b^- & \rTo^{\iso} & a^+ + C^+ + b^- & \\
  \dDashto<{g'}  & & \dTo>{g} \\
  (\ty (\Gamma) \tens a)^- + b^+ & \lTo^{\iso} & a^- + C^- + b^+.
\end{diagram}
describes a morphism $\ty(\Gamma) \tens a \to b$ in
\citet{Hughes:freestar}'s presentation of the free \smc\ category
$\C_X$ over $X$. For this, we just have to show correctness; but any
switching of $\nop{(\ty (\Gamma) \tens a)}$ amounts to a disjoint
union of a switching of each of $\nop{a}$ and the $\nop{\ty
  (\alpha_i)}$'s, so that correctness follows from that of $g$. Thus
$(C,G)$ yields a morphism $a \to b$ in $\C_{X, \Sigma}$; and
conversely, given $\Gamma$, any correct representative $g'$ in the
sense of Hughes defines a correct net in our sense, with reordering of
$\Gamma$ resulting in an isomorphism of supports.

Finally, we may mimic Trimble rewiring in our setting: say that $f \sim g$ when
$g$ is obtained by changing the target of a single edge from a negative
occurrence of $I$ in $f$, preserving correctness. This extends to an
equivalence relation which we call \emph{rewiring}. Letting $\D_{X, \Sigma}$ be
the quotient of  $\D^1_{X, \Sigma}$ modulo rewiring, we obtain:
\begin{theorem}
  The categories $\D_{X, \Sigma}$ and $\C_{X, \Sigma}$ are isomorphic in $\SMCCat$.\end{theorem}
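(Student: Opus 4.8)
The plan is to promote the homset-level correspondence established in the paragraph preceding the theorem to a strict symmetric monoidal closed isomorphism. I would define a functor $\Phi \colon \D_{X,\Sigma} \to \C_{X,\Sigma}$ which is the identity on objects, and which sends a net $(C,G) \colon a \to b$ to the morphism $q(\Gamma, \phi)$ of $\C_{X,\Sigma}$, where $\Gamma = (\alpha_1, \dots, \alpha_n)$ is obtained by choosing an ordering of the support $C$, and $\phi \colon \ty(\Gamma) \tens a \to b$ is the morphism of $\C_X$ named by the transported partial function $g'$ in Hughes' presentation. The bijection between nets and Hughes-correct representatives, together with the fact that Hughes' representatives present exactly the morphisms of $\C_X$, already shows that $\Phi$ is bijective on each homset; the remaining work is to check that $\Phi$ is a well-defined strict symmetric monoidal closed functor.

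The first point is well-definedness. Reordering the support sends the chosen representative to another one differing by the generating relation $(\sigma\Gamma,\psi) \sim (\Gamma, \psi\circ(\hat\sigma\tens a))$ of the congruence $\sim$ used to pass from $\C'_{X,\Sigma}$ to $\C_{X,\Sigma}$; hence $q(\Gamma,\phi)$ is independent of the ordering. It remains to see that rewiring a net does not change $q(\Gamma,\phi)$: here I would argue that moving the target of an edge out of a negative occurrence of $I$ in the net corresponds exactly to a rewiring move on the Hughes representative $g'$, so that $\phi$ is unchanged as a morphism of $\C_X$. The matching of sign conventions --- that the negative $I$-ports of the net are precisely the $I$-ports left free in Hughes' equivalence, with the signs dictated by $(a\impll b)' = \nop{a'}\parr b'$ --- is the delicate bookkeeping point here.

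Next I would establish functoriality. Identities match since the empty support and the identity graph transport to the pair $((), \cong)$ defining the identity of $\C_{X,\Sigma}$. For composition, the essential input is the property of Hughes' presentation noted in the introduction: composition of morphisms of $\C_X$ is computed on representatives by gluing graphs. Thus the net composition $G +_b H$, glued along the ports of $b$, transports to the composite of the corresponding Hughes representatives, while the concatenation of supports transports to the list concatenation $\Delta + \Gamma$; one then checks that the coherence isomorphism appearing in the composite $\xi$ of $\C'_{X,\Sigma}$ is exactly what is needed to reconcile $\ty(\Delta+\Gamma)$ with $\ty(\Delta)\tens(\ty(\Gamma)\tens a)$ under this transport. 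This reconciliation, tracking the associativity and unit constraints through the graphical gluing, is the main obstacle in the proof.

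Finally I would verify that $\Phi$ is strict symmetric monoidal closed. On tensor, the disjoint union of supports with juxtaposed graphs transports to $(\Gamma + \Gamma', \theta)$, which is exactly the tensor of $\C_{X,\Sigma}$; the unit $I$ is preserved on the nose; and the associativity, unitality and symmetry constraints of $\D_{X,\Sigma}$, being empty-support nets built from the corresponding constraints of $\C_X$, transport to their images under $qi$, which are the constraints of $\C_{X,\Sigma}$. For the closed structure, currying in the net presentation is effected by relabeling the ports of $U \tens V \to W$ as those of $U \to V \multimap W$ --- a bijection flipping precisely the signs of the ports of $V$ --- and this matches the adjunction isomorphism $\C'_{X,\Sigma}(U\tens V, W) \cong \C'_{X,\Sigma}(U, V\multimap W)$ after descent to the quotient. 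As $\Phi$ is then a strict symmetric monoidal closed functor that is the identity on objects and bijective on homsets, its inverse is automatically strict symmetric monoidal closed, and $\Phi$ is an isomorphism in $\SMCCat$.
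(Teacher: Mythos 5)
Your proposal is correct and follows essentially the same route as the paper: the paper gives no separate proof of this theorem, relying instead on exactly the correspondence you spell out --- nets with an ordered support transported to Hughes representatives of morphisms $\ty(\Gamma) \tens a \to b$ in $\C_X$, with reordering of the support matching the permutation congruence defining $\C_{X,\Sigma}$ and net rewiring matching Hughes' equivalence. Your additional verifications (functoriality via gluing-based composition, strictness of the \smc\ structure, and the observation that an identity-on-objects, homset-bijective strict \smc\ functor is invertible in $\SMCCat$) are precisely the details the paper leaves implicit.
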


The category $\D_{X, \Sigma}$ provides a graphical representation
of the free \smc\ category generated by $(X, \Sigma)$. 
If $X = \{x,y\}$ and $\Sigma$ is described by the following graph:
\begin{diagram}
    x & \rTo^{\alpha} & x \tens y & & y \tens (x \impll y) & \rTo^{\beta} & y
\end{diagram}
then an example morphism from $x \tens ((x \tens I) \impll y)$ to $I \impll (x \tens y)$ of $D_{X,\Sigma}$ is:
\begin{center}
    \includegraphics{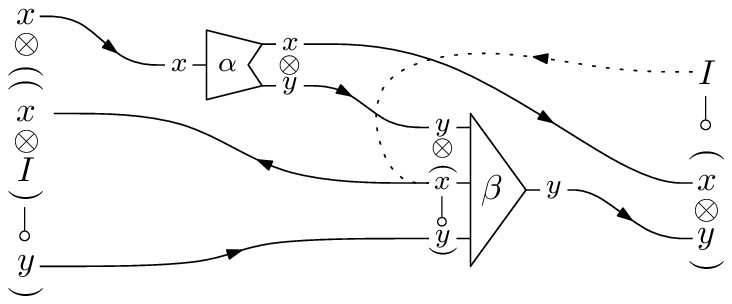}
\end{center}
Notice that the dotted link can be rewired to any positive port. 

\bibliographystyle{plainnat}
\bibliography{bib}

\end{document}